\newcommand{\cR}{\mathcal{R}}
\newcommand{\cP}{\mathcal{P}}
\newcommand{\cT}{\mathcal{T}}
\newcommand{\bx}{\mathbf{x}}
\newtheorem{theorem}{Theorem}
\newtheorem{definition}{Definition}
\newtheorem{lemma}[theorem]{Lemma}
\begin{document}

% paper title
\title{\begin{huge}
Cooperative Packet Routing using Mutual Information Accumulation
\end{huge}}
\pagenumbering{arabic}

% author names and affiliations
% use a multiple column layout for up to three different
% affiliations

\author{Yanpei Liu,~\IEEEmembership{Student Member,~IEEE,}
	 Jing Yang,~\IEEEmembership{Member,~IEEE,}
  	Stark C. Draper,~\IEEEmembership{Member,~IEEE} \thanks{This
    work was presented in part at the 49th annual Allerton Conference on
	Communication, Control, and Computing, Monticello IL, September
    2011.}  \thanks{The authors are with the Department of Electrical and
    Computer Engineering, University of Wisconsin, Madison, WI 53706
    (E-mail: \{yliu73@, yangjing@ece.,
    sdraper@ece.\}wisc.edu).}  \thanks{The authors were supported by the National Science Foundation under grant CCF-0963834, Air Force Office of Scientific Research under grant FA9550-09-1-0140 and by a grant from the Wisconsin Alumni Research Foundation.}}

\maketitle
\pagenumbering{arabic}

\begin{abstract}
We consider the resource allocation problem in cooperative
wireless networks wherein nodes perform mutual information
accumulation.  We consider a unicast setting and arbitrary arrival
processes at the source node.  Source arrivals can be broken down into
numerous packets to better exploit the spatial and temporal diversity
of the routes available in the network.  We devise a
linear-program-based algorithm which allocates network resource to meet a certain transmission objective.  
Given a network, a source with multiple arriving
packets and a destination, our algorithm generates a policy that
regulates which nodes should participate in transmitting which
packets, when and with what resource. By routing different packets
through different nodes the policy exploits spatial route diversity,
and by sequencing packet transmissions along the same route it exploits
temporal route diversity.
\end{abstract}

\begin{keywords}
Wireless relay networks, cooperative communication, mutual information accumulation, rateless codes.
\end{keywords}

\section{Introduction}

In multi-hop cooperative relay networks, relay nodes cooperate with
each other to deliver data from network source to network sink. The
benefits of cooperation include improvements in energy efficiency, in
robustness to fading and interference, and reductions in the likelihood
of the loss of connectivity \cite{Draper_2008, ExOR}.  Cooperation arises naturally in
the wireless medium due to its broadcast nature.  When two nodes are
communicating, neighboring nodes can listen in on the transmission and
thus are in a position to help.  As an example, nodes further from the
source and closer to the destination can overhear earlier nodes'
transmissions.  This gives them a head start on decoding the message
and speeds up message delivery in contrast to conventional multi-hop
transmission.

In order to be able to exploit the broadcast nature of wireless
transmission, we assume nodes are equipped with the following
abilities:
\begin{itemize}
\item Nodes can listen in on all transmissions.
\item Nodes store all observations and slowly accumulate noisy observations
until they can decode the message.
\end{itemize} 
To simplify our model we assume transmitters operate on orthogonal
channels and thus there is no interference.  The first point
simply means nodes continually monitor the full bandwidth of the
system.  The second point means that nodes have large memories and
thus can store all observations until they become useful.  This is
what we mean by mutual information accumulation.   Mutual information accumulation can be realized through the use of rateless (or ``fountain'') codes \cite{Draper_2008, Castura_2007, Gummadi_2008, Mitzenmacher_2004, LT_Code, Raptor_Code}.

Enabling nodes with this physical layer (PHY) ability changes the routing and resource allocation problem from that traditionally considered (e.g., in backpressure \cite{Tassiulas_1992, Tassiulas_1993, Neely_2008} or in network coding \cite{COPE}). Simpler PHY layer modiﬁcations such as energy accumulation using
space-time or repetition coding have been considered in \cite{Maric_2004, Maric_2005, Chen_2005}. The difference between energy accumulation and mutual information accumulation can be easily understood from the following example \cite{Draper_2008}. Consider binary signaling over a pair of independent erasure channels each having erasure probability $p_e$ from two relays to a single receiver. If the two relays use repetition coding, corresponding to energy accumulation, then each symbol will be erased with probability $p_e^2$. Therefore, $1-p_e^2$ symbols are successfully received on average per transmission. Instead, if we use different codes, the transmissions are independent and on average $2(1-p_e)$ novel parity symbols are received per transmission. 

%It is important to consider multiple packets going through a given network. With only a single packet transmission, as in \cite{Draper_2008, Urgaonkar_2010}, the question to be answered is merely which nodes should participate in transmission, when they should participate, and with how much network resource they participate. In multiple packets transmission, the extra factor need to be considered is which packets should a node transmit. This is important. As we show in this paper, by using our algorithm designed for multiple packets transmission, we can exploit the route diversity of the network in both spatial and temporal aspects. Different packets may take different routes to exploit the spatial route diversity. This is a desired feature when packets compete for resource; it may be desirable to route a group of packets through one link and the rest through another to avoid chocking the already busy route. To exploit the temporal diversity, packets can be routed on a single route, one followed by another. This usually happens when routes choices are limited or packets arrive at different times. Essentially, our algorithm exploits available resource within the network and distributes it efficiently. 

In contrast to previous works that consider the routing of a single
packet \cite{Draper_2008, Urgaonkar_2010}, in this work we consider general arrival processes
and the routing of multiple packets.  This generalization is important
to consider.  With multiple packets on the go there are extra spatial
and temporal degrees of freedom to consider that do not arise in the
single packet case. As we show in this paper, by using our algorithm designed for multiple
packets transmission, we can exploit the route diversity of the
network in both spatial and temporal aspects. Different packets may be routed along partially (or wholly) disjoint
routes, thereby exploiting spatial or route diversity. This is
a desired feature when packets compete for resource; it may be
desirable to route a group of packets through one link and the rest
through another to balance the traffic and resource consumption. Depending on
the arrival process different packets may be sent in sequence along
the same route, thereby exploiting temporal diversity. This usually happens when routes choices are
limited or packets arrive at different times. Essentially, our algorithm exploits available resource within the network and
distributes it efficiently.  

The main contribution of our work is the design of a joint routing and resource allocation algorithm for multiple packets transmission from the source to the destination in a given network. Our work is a natural extension of the work in \cite{Draper_2008} where only single packet transmission is considered. The contribution of our work is twofold. We first present a formulation of the routing and resource allocation problem. The formulation considers various forms of energy and bandwidth constraints and is expressed in the form of linear program (LP). Second, we design a centralized algorithm which solves the routing and resource allocation \textit{jointly} by solving a sequence of LPs. Each LP solves for the optimal resource allocation given a route decision of all packets. The resource allocation result is then used to update the route decision and the method proceeds iteratively.

Our work differs from previous cooperative routing and resource allocation works in the sense that we consider \textit{both} PHY layer and routing (rather than just the PHY layer in simple networks or both in simplified settings). It differs from the current state-of-the-art cooperative routing algorithms (e.g., opportunistic multi-hop routing \cite{ExOR}, backpressure-based algorithm \cite{Neely_2008, Tassiulas_1992, Tassiulas_1993}, and network-coding-based algorithm \cite{MIXIT, Tracey_2006}) in that it involves a modified PHY layer with an extra feature -- mutual information accumulation which can be easily implemented using rateless codes. Meanwhile, there are many existing works in cooperative routing using this feature. In \cite{Castura_2007}, mutual information is considered for single relay networks. In \cite{Molisch_2007}, mutual information is also considered, but without a consideration on resource allocation. Routing and resource allocation with mutual information accumulation is considered in \cite{Draper_2008, Urgaonkar_2010}, but they only consider single packet routing. To the best of our knowledge, there has been little prior work investigating routing and resource allocation with mutual information accumulation for multiple packets.

The rest of the paper is organized as follows. We present the system model and problem formulation in Sec.~\ref{sec.system_model}. The centralized algorithm is developed and discussed in Sec.~\ref{sec.centralized_algorithm}. We provide detailed numerical results in Sec.~\ref{sec.simulation_results}. Discussion and conclusion are in Sec.~\ref{sec.discussion_and_conclusion}.
\section{System Model}
\label{sec.system_model}
We consider a system with $L$ nodes: the source, always labeled $1$, the destination, always labeled $L$ and $L-2$ relay nodes. Suppose that $K$ large data files (not necessarily of the same size) arrive at the source at a set of specified times $\tau_1, \tau_2, \ldots, \tau_K$. Upon arrival each file is subdivided into a number of packets.  In all there are $N \geq K$ packets.  The indices of the packets are ordered according to the arrival times of the corresponding data files. We consider each of these packets as a separate commodity from a routing point of view.
The network's goal is to deliver $N$ such packets from the source to the destination under various channel constraints. Each relay may participate in transmitting a subset of these packets or remain silent. To simplify the analysis we assume that the only significant power expenditure for each node lies in transmission.

\subsection{Problem parameterization}
We first introduce the definition of a ``decoding event''.

\begin{definition}
A decoding event occurs when either a node decodes a packet, or a packet is made available at the source.
\end{definition}

We assume a decode-and-forward relaying strategy. We denote as $T_i^c$ the time at which node $i$ decodes packet $c$, $c \in \{1, 2, \ldots, N\}$. In particular, each $T_1^c$ corresponds to the arrival time of the corresponding data file $\ell$ which packet $c$ is subdivided from, i.e., $T_1^c = \tau_{\ell}$. We next present the definition of a ``decoding order'' -- a key component in our algorithm. 

\begin{definition}
A decoding order $\cT$, containing $T_{1}^c$ and $T_{L}^c$ for all $c \in \{1, 2, \ldots, N\}$, is a subset of the set $\{T_1^1, \ldots, T_1^N, \ldots, T_L^N \}$ sorted in increasing order. 
\end{definition}

It should be noted that a decoding order $\cT$ only specifies the order of its decoding events. It does not specify when those events occur (or the value of $T_i^c$'s). 

We let $\cT$ denote the decoding order and let $M = |\cT|$ denote the number of decoding events inside this decoding order $\cT$. Since a node cannot transmit a packet until it has decoded it, the order of decoding events puts constraints on the resource allocated to nodes for each packet. We further note that a decoding order can contain up to $NL$ decoding events, i.e., $M \leq NL$. Then the label pair $(i, c)$ of $T_i^c$ uniquely determines its position $s$, $s \in \{1, 2, \ldots, M\}$. In other words, there exists a unique mapping $f$ which maps $(i, c)$ to $s$ and we further denote this mapping as $T_i^c \equiv T_{f(i, c)} \equiv T_s$. 

With a decoding order $\cT$, we define the ``inter-decoding intervals'' for its decoding events. 
\begin{definition}
Given a $\cT$, the inter-decoding interval $\Delta_s$ is the time interval between $T_s$, $T_{s-1} \in \cT$ and is defined as
\begin{align*}
\Delta_s = T_s - T_{s-1},
\end{align*}
where we denote time $0$ as $T_0$.
\end{definition} 
The entire message transmission can be thought of as consisting $M$ intervals. The $s$th interval is of duration $\Delta_s$ and is characterized by the fact that at the end of this $\Delta_s$, a decoding event associated with $T_s$ \textit{must} occur. A typical decoding order is shown in Fig.~\ref{fig.do}.
\begin{figure}[!t] 
  \centerline{\epsfig{figure=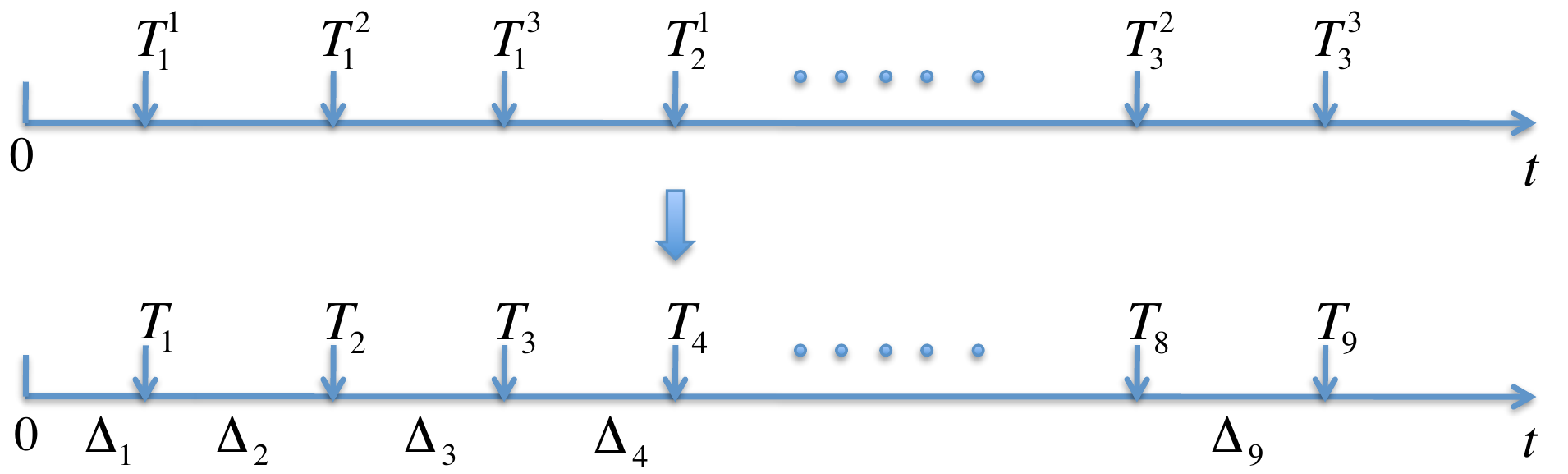,width=9cm}}
  \caption{A sample decoding order for $N = 3$, $L = 3$ and $M = NL$.}
  \label{fig.do}
\end{figure} 

We assume the $i$th node operates at a fixed power spectral density (PSD) denoted as $P_i$ (joules/sec/Hz), uniform across its transmission band. We assume the channel between any pair of nodes is block-fading and frequency non-selective. The channel gain between node $i$ and node $j$ is denoted as $h_{i,j}$. Under these assumptions, the spectral efficiency (bits/sec/Hz) between node $i$ and node $j$ can be expressed as \cite{Shannon_1948, Draper_2008}
\begin{align}
\label{eq.spectral_efficiency}
C_{i,j} = \log_2 \bigg [ 1+ \frac{h_{i,j} P_i W_i}{N_0 W_i} \bigg ] = \log_2 \bigg [ 1+ \frac{h_{i,j} P_i}{N_0} \bigg ], 
\end{align}
where $N_0/2$ denotes the PSD of the white noise process. We further denote the time-bandwidth product allocated to node $i$ to transmit packet $c$ during a given time interval $\Delta_s$ as $A_{i,s}^c$ (sec-Hz). Then the information flow from node $i$ to node $j$ during this interval is $A_{i, s}^c C_{i, j}$ bits.

\subsection{Problem constraints}
For a given decoding order we find the resource allocation minimizing an objective function subject to the following constraints:

%the total transmission time $T_t$. The objective function is
%\begin{align}
%\label{eq.total_tx_time}
%T_t = \sum_{s=1}^M T_s.
%\end{align}
%We minimize this linear objective function subject to the following constraints:
\begin{enumerate}
\item $\Delta_s \geq 0$ for all $s$. 
\item $A_{i, s}^c \geq 0$ for all $s$, $i$ and $c$. 
\item Arrival process constraint; the source cannot transmit packet $c$ until its corresponding data file has reached the source. 
\item Decoding constraint; node $j$ must decode packet $c$ at the associated timing point $T_s$.
\item Constraint(s) on energy and bandwidth. 
\end{enumerate}

Arrival process constraint puts a constraint on value of $T_1^c$'s (thus the time intervals between them). Each $T_1^c$ corresponds to the arrival time of the corresponding data file $\ell$ which packet $c$ is subdivided from, i.e., $T_1^c = \tau_{\ell}$. We describe constraints $4)$ and $5)$ in more details below. 

We assume that nodes use codes that are ideal in the sense that they fully capture this potential flow, working at the Shannon limit at any rate. We also assume that distinct packets can be simultaneously transmitted from any node with no interference between packets and receivers have perfect knowledge to distinguish them. Nodes are further designed to use \textit{independently} generated codes. This design leads to another assumption that a receiver can combine information flows from two or more transmitters for each packet without any rate loss. As noted in \cite{Draper_2008}, the use of independently-generated codes is crucial for the mutual information accumulation process. If the transmitters used the \textit{same} code, the receiver would get multiple looks at each codeword symbol. This is ``energy accumulation.'' By looking at different codes the receiver accumulates mutual information rather than energy. 

With mutual information accumulation, the decoding constraint imposes a constraint on some timing points. If $T_j^c$, $j \neq 1$ is in the decoding order, node $j$ must be able to decode packet $c$ at this time. This constraint is formally expressed as
\begin{align}
\label{eq.decoding_con}
\sum_{i : f(i, c) < f(j, c)} \sum_{s = f(i, c) + 1}^{f(j, c)} A_{i, s}^c C_{i, j} \geq B^c, \mbox{ for all } c, j \neq 1,
\end{align}
where $B^c$ is the size of packet $c$. Recall that $C_{i, j}$ is the spectral efficiency (bits/sec/Hz) of the channel connecting node $i$ to node $j$. Equation (\ref{eq.decoding_con}) says that in order for node $j$ to decode packet $c$, the total accumulated information at node $j$ for packet $c$ must exceed $B^c$ bits by the time $T_j^c$. The non-ideal nature of existing implementations of rateless codes can be handled by incorporating an overhead factor $(1+\epsilon)$ into the right-hand side of (\ref{eq.decoding_con}). Also, the relation (\ref{eq.decoding_con}) expresses the constraint that node $i$ can transmit packet c to node $j$ only after it has decoded packet $c$.

Constraints on energy and bandwidth can either be system-wide constraints or be imposed on a node by node basis. We state various possible constraints in the followings. 

\begin{enumerate}
\item \textit{Per-node bandwidth constraint}: If node $i$ is assigned with bandwidth $W_i$, its resource allocated during given $\Delta_s$ must satisfy the following
\begin{align}
\label{eq.per-node_bw_con}
\sum_{c=1}^N A_{i, s}^c \leq \Delta_s W_i, \mbox{ for all } i, s. 
\end{align}
\item \textit{Sum bandwidth constraint}: If the total bandwidth $W_{T}$ is allocated, a sum bandwidth constraint applies across all nodes during any given $\Delta_s$. This can be formally expressed as
\begin{align}
\label{eq.sum_bw_con}
\sum_{i = 1}^L \sum_{c=1}^N A_{i, s}^c \leq \Delta_s W_{T}, \mbox{ for all } s. 
\end{align}
\item \textit{Per-node energy constraint}: If node $i$ with transmission power $P_i$ is assigned with energy budget $E_i$, we can express the per-node energy constraint as
\begin{align}
\label{per-node_eng_con}
\sum_{c=1}^N \sum_{s = f(i, c) + 1}^M A_{i, s}^c P_i \leq E_i, \mbox{ for all } i. 
\end{align}
\item \textit{Sum energy constraint}: If a sum energy $E_{T}$ is assigned for all nodes, a sum energy constraint can be applied as
\begin{align}
\label{eq.sum_eng_con}
\sum_{i = 1}^L \sum_{c=1}^N \sum_{s = f(i, c) + 1}^M A_{i, s}^c P_i \leq E_{T}. 
\end{align}
\end{enumerate} 

In the next section we develop a centralized algorithm based on the LP framework. 

\section{Centralized Algorithm}
\label{sec.centralized_algorithm}

The LP framework can take many objective functions. If there is only one data file (thus $T_1^1, T_1^2, \ldots , T_1^N = \tau_1$), one natural objective function to use is the total transmission time
\begin{align}
\label{eq.total_tx_time}
T_t = \sum_{s=1}^M \Delta_s.
\end{align}
Alternatively, if $K$ data files arrive at different times (thus packets are not available to the source at the same time), instead of minimizing total transmission time, one might want to minimize the average transmission time given by
\begin{align}
\label{eq.avg_tx_time}
T_a = \frac{1}{K} \sum_{\ell=1}^K T_{\ell}, 
\end{align}
where $T_{\ell}$ is the duration data file $\ell$ is being transmitted. This objective function is appropriate in the sense that it does not penalize the arrival time.
 
Other linear programming frameworks are also possible. For example, one may wish to minimize the total energy expenditure given by
\begin{align}
\label{eq.total_energy_exp}
\sum_{i = 1}^L \sum_{c=1}^N \sum_{s = f(i, c) + 1}^M A_{i, s}^c P_i,
\end{align}
subject to total transmission time or average transmission time constraint. 

\subsection{Characteristics of the problem}
We now study the properties of routing and resource allocation under different constraints with total transmission time (\ref{eq.total_tx_time}) as our objective function. First, under the per-node bandwidth constraint, we have the following lemma. 
\begin{lemma}
Under a given decoding order, suppose $\Delta_s$ and $A_{i, s}^c$ are the solution to the problem under the per-node bandwidth constraint (\ref{eq.per-node_bw_con}). Then for each $s$, there exists a node $i$ such that the inequality in (\ref{eq.per-node_bw_con}) becomes equality. 
\end{lemma}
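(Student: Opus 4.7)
The plan is to argue by contradiction. Suppose the claim fails at some index $s^{*}$, so that inequality~(\ref{eq.per-node_bw_con}) is strict at $s^{*}$ for every node, i.e., $\sum_{c=1}^{N} A_{i,s^{*}}^{c} < \Delta_{s^{*}} W_{i}$ for all $i$. The goal is to exhibit a new feasible tuple with strictly smaller value of the objective $T_{t}=\sum_{s} \Delta_{s}$, contradicting optimality of $(\Delta, A)$.

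The perturbation is the simplest possible: leave every $A_{i,s}^{c}$ and every $\Delta_{s}$ with $s\neq s^{*}$ untouched, and set $\Delta'_{s^{*}} = \Delta_{s^{*}}-\delta$ with $\delta := \min_{i}\bigl[\Delta_{s^{*}} - (\sum_{c} A_{i,s^{*}}^{c})/W_{i}\bigr]$, which is strictly positive by the failure hypothesis. Non-negativity of $\Delta'$ is immediate, and the per-node bandwidth inequality at $s^{*}$ is preserved for every $i$ (and in fact tight for the minimising $i$). All other per-node bandwidth inequalities are unchanged because both the $A_{i,s}^{c}$'s and the remaining $\Delta_{s}$'s are left alone, and the decoding constraint~(\ref{eq.decoding_con}) is a statement purely about the $A_{i,s}^{c}$'s and is therefore automatically preserved. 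The new objective value is $T_{t}-\delta < T_{t}$, which contradicts optimality.

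The only point of care, and really the main obstacle, is the arrival-process constraint, which pins $T_{1}^{c}=\tau_{\ell}$ and is sensitive to the shift $T_{s}\mapsto T_{s}-\delta$ that the perturbation induces for $s\ge s^{*}$. In the single-file setting natural to the objective~(\ref{eq.total_tx_time}), every source event sits at a position $s\le N$ with $\Delta_{s}=0$, so the strict-inequality hypothesis $\Delta_{s^{*}} W_{i} > \sum_{c} A_{i,s^{*}}^{c}\ge 0$ forces $\Delta_{s^{*}}>0$ and hence $s^{*}>N$, which means no pinned arrival event lies in $\{s^{*},s^{*}+1,\dots,M\}$. For the multi-file version, the same argument is applied on the final segment following the last source arrival; any slack inside an earlier segment can be transported into the final segment by simultaneously decreasing $\Delta_{s^{*}}$ by $\delta$ and increasing by $\delta$ the $\Delta_{s}$ whose endpoint is the next pinned arrival, which preserves every arrival equality and the total objective, so in every optimal solution the conclusion of the lemma can be arranged without violating arrival constraints.
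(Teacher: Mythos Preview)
Your argument is correct and is essentially the same contradiction the paper uses: if the per-node bandwidth constraint is slack for every node at some $s^{*}$, shrink $\Delta_{s^{*}}$ and obtain a strictly smaller objective. The paper's proof is the two-line version of your first two paragraphs and does not mention the arrival-process constraint at all; your third paragraph is extra care the authors simply omit, and in the single-file setting in which the lemma is stated (objective~(\ref{eq.total_tx_time})) your observation that $\Delta_{s^{*}}>0$ forces $s^{*}$ to lie past all pinned source events disposes of the issue cleanly.
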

\begin{proof}
We prove by contradiction. Suppose for some $s$ inequality is strict for all $i$, i.e., 
\begin{align*}
\sum_{c=1}^N A_{i, s}^c < \Delta_s W_i, \mbox{ for all } i. 
\end{align*}
Then we can scale down $\Delta_s$, yielding a smaller objective value. This contradicts the assumption that $\Delta_s$ is the optimal solution to the LP. 
\end{proof}

This proposition suggests that the $\Delta_s$ can be calculated as 
\begin{align*}
\Delta_s = \max_{i} \frac{\sum_{c=1}^N A_{i, s}^c}{W_i}.
\end{align*} 

Under the sum bandwidth constraint, we first have the following lemma. 
\begin{lemma}
\label{tm.sum_bw_lemma1}
Under a given decoding order, suppose $\Delta_s$ and $A_{i, s}^c$ are the solution to the problem under the sum bandwidth constraint (\ref{eq.sum_bw_con}). Then the inequality in (\ref{eq.sum_bw_con}) must be equality for all $s$. 
\end{lemma}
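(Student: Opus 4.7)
The proof will be by contradiction, in exactly the same spirit as the proof of the preceding lemma, but simpler because (\ref{eq.sum_bw_con}) imposes only a single inequality per inter-decoding interval rather than one per node. I would suppose, for contradiction, that at the optimum there is some index $s^\star$ for which the inequality is strict, i.e.
\[
\sum_{i=1}^L \sum_{c=1}^N A_{i,s^\star}^c < \Delta_{s^\star} W_T,
\]
and then construct a perturbed candidate by leaving every $A_{i,s}^c$ and every $\Delta_s$ with $s\neq s^\star$ unchanged, replacing $\Delta_{s^\star}$ by $\Delta_{s^\star}-\epsilon$ for a sufficiently small $\epsilon>0$.

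The next step is to check that the perturbed solution is still feasible. Non-negativity of $\Delta_{s^\star}-\epsilon$ is immediate for small enough $\epsilon$. The constraint (\ref{eq.sum_bw_con}) at $s^\star$ still holds by continuity, and at every other $s$ it is literally unchanged because only the $s^\star$-th interval was modified. The decoding constraint (\ref{eq.decoding_con}) and the energy constraints (\ref{per-node_eng_con})--(\ref{eq.sum_eng_con}) depend only on the $A_{i,s}^c$ and on the fixed channel and power parameters, so they are untouched. Since $T_t = \sum_s \Delta_s$ is strictly smaller by exactly $\epsilon$, this produces a strictly better feasible point, contradicting optimality and forcing equality at every $s$.

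The one subtlety I would need to dispose of is the arrival process constraint, which pins down the source-arrival events $T_1^c = \tau_\ell$ and therefore fixes the partial sums $\sum_{s\le f(1,c)}\Delta_s$. If $s^\star$ sits strictly before some later source-arrival event, shrinking $\Delta_{s^\star}$ alone would pull that event earlier than its physical arrival time. I expect this to be the main obstacle, and I would handle it by a slack-relocation argument: reduce $\Delta_{s^\star}$ by $\epsilon$ while simultaneously increasing a neighboring $\Delta_{s'}$ that lies inside the same inter-arrival block by the same $\epsilon$, thereby preserving every $T_1^c$ but displacing the slack to a later interval; iterating, the slack can be pushed past the last source arrival, where it can finally be shaved off to strictly decrease $T_t$. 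All other constraints are preserved at each step because they are insensitive to the $\Delta_s$'s (apart from the corresponding sum bandwidth constraint, whose slack only grows when $\Delta_{s'}$ is increased).
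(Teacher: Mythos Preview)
Your core argument in the first two paragraphs is correct and is exactly the paper's proof: assume strict inequality at some $s^\star$, shrink $\Delta_{s^\star}$ while leaving all $A_{i,s}^c$ untouched, observe that the decoding and energy constraints involve only the $A$'s, and conclude a strictly smaller objective. The paper does nothing more than this.

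You are more careful than the paper in flagging the arrival process constraint, but your proposed fix has a gap. Because the paper imposes $T_1^c=\tau_\ell$ as an \emph{equality}, the partial sums $\sum_{s\le f(1,c)}\Delta_s$ are pinned for every source-arrival event. Your slack-relocation move, decreasing $\Delta_{s^\star}$ and increasing a later $\Delta_{s'}$, preserves the arrival time at a boundary only if both $s^\star$ and $s'$ lie on the same side of that boundary. Within a single inter-arrival block you can shuffle the slack around, but the sum over the block is fixed, so the slack never leaves the block and can never be ``pushed past the last source arrival'' as you claim. Indeed, with nontrivial arrivals the lemma as literally stated can fail: if packet~1 arrives at time $0$ and is fully delivered by time $10$, while packet~2 arrives at time $100$, the optimal schedule under a fixed decoding order has an idle interval with $\sum_{i,c}A_{i,s}^c=0<\Delta_s W_T$, and nothing can be done about it without violating $T_1^2=100$.

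The resolution is contextual rather than technical: the paper introduces the objective~(\ref{eq.total_tx_time}) specifically for the single-data-file case (all $T_1^c=\tau_1$), and with $\tau_1=0$ the arrival constraint is vacuous. In that setting your first two paragraphs already constitute a complete proof, and the third paragraph can be dropped.
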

\begin{proof}
We prove this by contradiction. Suppose for some $s$ the equality does not hold, i.e., 
\begin{align*}
\sum_{i = 1}^L \sum_{c=1}^N A_{i, s}^c = \Delta_s W_s < \Delta_s W_T, \mbox{ for some } s. 
\end{align*}
Then for these $s$ we can scale down the corresponding $\Delta_s$ by $\frac{W_s}{W_T}$ while increase $W_s$ to $W_T$. We therefore obtain a solution which has smaller objective value. 
\end{proof}
 
This lemma leads to the following important theorem. 

\begin{theorem}
\label{tm.sum_bw_theorem}
Under the sum bandwidth constraint and a given decoding order, if $P_i = P$ for all $i$ then the solution that minimizes the objective in (\ref{eq.total_tx_time}) also minimizes the sum energy. 
\end{theorem}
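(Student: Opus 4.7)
The plan is to show that under the hypotheses of the theorem, the sum energy is simply a positive constant multiple of the total transmission time, so the two objectives are minimized by exactly the same feasible point.

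First I would invoke Lemma~\ref{tm.sum_bw_lemma1}, which guarantees that at any optimum the sum bandwidth constraint is tight at every interval: $\sum_{i=1}^L \sum_{c=1}^N A_{i,s}^c = \Delta_s W_T$ for all $s$. Summing this equality over $s=1,\ldots,M$ gives
\begin{align*}
\sum_{s=1}^M \sum_{i=1}^L \sum_{c=1}^N A_{i,s}^c = W_T \sum_{s=1}^M \Delta_s = W_T \, T_t.
\end{align*}

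Next, I would reconcile the two slightly different summation ranges appearing in the problem. The sum energy in (\ref{eq.total_energy_exp}) sums $A_{i,s}^c$ only over $s \geq f(i,c)+1$, whereas the bandwidth constraint sums over all $s$. However, a node cannot transmit a packet it has not decoded, so any feasible assignment has $A_{i,s}^c = 0$ for $s \leq f(i,c)$; hence the two triple sums coincide. Combined with $P_i = P$ for all $i$, the sum energy becomes
\begin{align*}
\sum_{i=1}^L \sum_{c=1}^N \sum_{s=f(i,c)+1}^M A_{i,s}^c P_i = P \sum_{i,c,s} A_{i,s}^c = P W_T \, T_t.
\end{align*}

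Since $PW_T > 0$ is independent of the decision variables, any feasible solution that minimizes $T_t$ automatically minimizes the sum energy, which is the claim. The only subtle step is the zero-padding observation that bridges the two index ranges; beyond that, the argument is a direct corollary of Lemma~\ref{tm.sum_bw_lemma1}, so I do not expect any real obstacle.
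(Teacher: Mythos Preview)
Your proposal is correct and is essentially the paper's own proof: the paper also uses the zero-padding observation to extend the inner sum to all $s$, then applies Lemma~\ref{tm.sum_bw_lemma1} to collapse $\sum_{i,c,s} A_{i,s}^c$ to $W_T T_t$, obtaining sum energy $= P W_T T_t$ and concluding that the two objectives are proportional. The only difference is the order in which you present the two steps.
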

\begin{proof}
The total energy expenditure of the entire system is
\begin{align*}
& \sum_{i = 1}^L \sum_{c=1}^N \sum_{s = f(i, c) + 1}^M A_{i, s}^c P \\
&= \sum_{i = 1}^L \sum_{c=1}^N \sum_{s = 1}^M A_{i, s}^c P \\
&= \sum_{s = 1}^M \Delta_s W_T P \\
&= T_tW_T P.
\end{align*}
The first equality holds because a node cannot transmit a given packet (resource allocated to it is zero) before it decodes the packet. The second equality follows from Lemma~\ref{tm.sum_bw_lemma1}. Since the objective $T_t$ is proportional to the energy used, minimizing one minimizes the other. 
\end{proof}

This theorem tells us that in this setting there is no trade off between total transmission time and energy. The minimum transmission time policy is identical to the minimum energy policy.

\begin{theorem}
\label{tm.sum_bw_theorem2}
Under the sum bandwidth constraint and a given decoding order, if the minimum transmission time for transmitting one packet is $T$, then the minimum transmission time for routing $N$ packets of same size is $NT$. 
\end{theorem}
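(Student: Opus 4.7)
The plan is to leverage Theorem~\ref{tm.sum_bw_theorem} to recast both the single-packet and the $N$-packet time-minimization problems as energy-minimization problems, and then to exploit the fact that the total-energy objective is separable across packets while each packet's decoding constraint (\ref{eq.decoding_con}) involves only that packet's own variables. This reduces the theorem to an additivity statement for per-packet minimum energies, plus a sequential construction for the matching upper bound.

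First I would establish the upper bound $T_N \le NT$ by a sequential construction. Take the optimal single-packet schedule $\{\Delta_s^{*}, A_{i,s}^{*}\}$ that attains total time $T$ and concatenate $N$ time-shifted copies, dedicating the $c$-th copy to packet $c$ and setting the allocation to all other packets to zero during that block. Within every block exactly one packet is active, so the sum bandwidth constraint (\ref{eq.sum_bw_con}) and the decoding constraint (\ref{eq.decoding_con}) each collapse block-by-block to the already-satisfied single-packet constraints. The concatenated policy thus has total time $NT$.

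Next I would prove the matching lower bound $T_N \ge NT$. Theorem~\ref{tm.sum_bw_theorem} (applied with uniform PSD $P$) says that the minimum time equals the minimum total energy divided by $W_T P$; in particular the minimum single-packet energy is $E_1^{*} = T W_T P$, and $T_N = E_N^{*}/(W_T P)$ where $E_N^{*}$ is the minimum $N$-packet energy. Decompose the total energy by packet,
\begin{align*}
\sum_{i, c, s} A_{i,s}^{c}\, P \;=\; \sum_{c=1}^{N} \Big( \sum_{i, s} A_{i,s}^{c}\, P \Big).
\end{align*}
Each inner sum depends only on the variables of packet $c$, and by (\ref{eq.decoding_con}) those variables must on their own satisfy the single-packet decoding requirement (with the decoding order obtained by restricting $\cT$ to packet $c$). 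Hence each inner sum is bounded below by $E_1^{*}$, the minimum is at least $N E_1^{*} = N T W_T P$, and dividing by $W_T P$ yields $T_N \ge NT$.

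The main obstacle will be the separability argument in the lower bound: one must verify that the restriction of a jointly-optimal $N$-packet schedule to a single packet is indeed a feasible single-packet instance whose energy is bounded below by $E_1^{*}$. This works because (\ref{eq.decoding_con}) couples packets only through the shared decoding order and never mixes different packets' variables, and because $E_1^{*}$ is defined as the minimum over all single-packet decoding orders, so whichever induced order appears cannot drive the per-packet energy below $E_1^{*}$. Combining the two bounds gives $T_N = NT$.
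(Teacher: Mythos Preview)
Your argument is correct and takes a genuinely different route from the paper's. For the upper bound, the paper runs all $N$ packets in \emph{parallel} along the optimal single-packet route, giving each a $1/N$ share of the total bandwidth $W_T$ so that each finishes at time $NT$; you instead \emph{concatenate} $N$ sequential copies of the single-packet schedule. Both constructions yield $NT$, but yours makes the feasibility check slightly more transparent since only one packet is active in any block. For the lower bound, the paper argues by contradiction on ``routes'': if the total time were below $NT$, then by Lemma~\ref{tm.sum_bw_lemma1} the total time-bandwidth product would be below $NTW_T$, so some packet would receive strictly less than $TW_T$ units of resource and yet still decode, contradicting the single-packet optimality of $T$. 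Your separability argument is exactly this pigeonhole step made explicit, and is if anything cleaner than the paper's somewhat informal route-based phrasing.

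One small correction: you do not need Theorem~\ref{tm.sum_bw_theorem}, and invoking it silently imposes the extra hypothesis $P_i = P$ that is not part of the present statement. The bridge you really want is Lemma~\ref{tm.sum_bw_lemma1}, which (with no assumption on the $P_i$) gives $\sum_{i,c,s} A_{i,s}^{c} = T_t W_T$ at optimality, so minimizing time is equivalent to minimizing total time-bandwidth resource. Your decomposition then reads $\sum_{c}\bigl(\sum_{i,s} A_{i,s}^{c}\bigr) \ge N\cdot T W_T$, where the per-packet bound holds because from any $\{A_{i,s}^{c}\}$ satisfying (\ref{eq.decoding_con}) one can manufacture a feasible single-packet schedule by setting $\Delta_s = W_T^{-1}\sum_i A_{i,s}^{c}$, whose duration $W_T^{-1}\sum_{i,s} A_{i,s}^{c}$ is therefore at least $T$. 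This is precisely the content of your final paragraph, just routed through Lemma~\ref{tm.sum_bw_lemma1} rather than Theorem~\ref{tm.sum_bw_theorem}.
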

\begin{proof}
Let $\cR$ denote the optimal set of routes on which transmitting a packet takes $T$ amount of time. These routes are equivalent in the sense that transmitting a packet on any of them takes $T$ amount of time, consuming $TW_T$ a amount of total resource. Then transmitting $N$ packets of the same size on $\cR$ can be finished within $NT$ amount of time by allocating each packet $\frac{TW_T}{N}$ amount of total resource. 

We now show that this is the best the system can do. Suppose otherwise, i.e., not all packets are routed through $\cR$ and the total transmission time is smaller than $NT$. Then there exists at least one packet traveling on a different set of equivalent routes $\cR' \neq \cR$. Since the transmission time is smaller than $NT$, each packets traveling on $\cR$ should be given more than $\frac{TW_T}{N}$ amount of resource. This implies that the packet on $\cR'$ shares less than $\frac{TW_T}{N}$ amount of resource whereas achieves a total transmission time less than $NT$. This contradicts our assumption that $\cR$ is the optimal set of equivalent routes. 
\end{proof}

This theorem suggests that in sum bandwidth constraint, the performance may not be improved by dividing a big data file into smaller packets. We will illustrate this in simulation.

\subsection{Optimizing the Decoding Order}
In this section we show an important theorem that motivates the centralized algorithm. The theorem tells us how to manipulate the decoding order based on the solution of the LP problem. Denote as $\cP$ a LP framework with a chosen objective and constraints being considered. Given any decoding order of length $M$, define
\begin{align*}
\bx = \bigg [ \Delta_1, \ldots, \Delta_M, A_{1, 1}^1, \ldots, A_{L, M}^N \bigg ]
\end{align*}
to be the solution to the LP with optimal objective value $T_{opt}$ on this decoding order. We then have the following theorem. 
\begin{theorem}
\label{tm.swap_tm}
If $\Delta_{m} = 0$ for some $m$ and we swap the positions of $T_{m}$ and $T_{m-1}$ in the decoding order, then the objective value $T_{opt}^*$ obtained with this swapped decoding order satisfies $T_{opt}^* \leq T_{opt}$.
\end{theorem}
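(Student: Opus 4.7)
The plan is to prove $T_{opt}^* \le T_{opt}$ by exhibiting an explicit feasible solution $\bx'$ for the LP under the swapped decoding order whose objective value equals $T_{opt}$. Since the two intervals $\Delta_{m-1}$ and $\Delta_m$ straddle the same physical time point (because $\Delta_m = 0$), the guiding intuition is that relabeling which of the two simultaneous decoding events happens ``first'' costs nothing. I would therefore define $\bx'$ by keeping $\Delta'_s = \Delta_s$ for $s\notin\{m-1,m\}$, setting $\Delta'_{m-1} = \Delta_{m-1}$ and $\Delta'_m = 0$, and carrying over all resource variables $A'^{c}_{i,s} = A^{c}_{i,s}$. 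The objective is then immediately $\sum_s \Delta'_s = T_{opt}$.

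The bulk of the work is checking that $\bx'$ satisfies all of the constraints listed in Sec.~II-B under the swapped order. The nonnegativity constraints (1)--(2) are automatic. For the arrival constraint (3), the only events whose timing changes are those being swapped, and under my choice of $\Delta'$'s both new positions $m-1$ and $m$ correspond to exactly the time $T_{m-1}=T_m$ of the original order, so arrival constraints on any $T_1^c$ carry over verbatim. The bandwidth constraint at $s=m-1$ is inherited from the original $s=m-1$ constraint, and at $s=m$ it collapses because $\Delta'_m=0$ forces $A'^c_{i,m}=0$, which is already true for $\bx$ by the original bandwidth constraint applied to $\Delta_m = 0$. The energy constraints are preserved because the only affected summation lower limits are $f'(i_{m-1},c_{m-1})+1 = m+1$ (vs.\ the original $m$) and $f'(i_m,c_m)+1 = m$ (vs.\ the original $m+1$); the single term added or removed in either case is $A^c_{i,m} = 0$.

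The main obstacle, and the step I would work out most carefully, is the decoding constraint (4) at the two swapped positions. For the new event $(i_m,c_m)$ now at position $m-1$, the constraint accumulates contributions over intervals indexed up through $m-1$; this is precisely the original constraint for $(i_m,c_m)$ at position $m$, since the term at $s=m$ in the original vanishes due to $\Delta_m=0$. For the new event $(i_{m-1},c_{m-1})$ now at position $m$, the allowable range for contributions extends through interval $m$, but $\Delta'_m = 0$ annihilates those extra terms, so the constraint reduces back to the original constraint at position $m-1$. The one subcase that merits extra care is $c_{m-1}=c_m$: node $i_m$ now appears as a potential transmitter of the shared packet during the new $m$th interval, but the same $\Delta'_m=0$ argument shows its contribution is zero, and the constraint again matches the original one.

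Once these verifications are complete, $\bx'$ is feasible for the swapped-order LP with objective value exactly $T_{opt}$; hence the optimum $T_{opt}^*$ of the swapped LP satisfies $T_{opt}^* \le T_{opt}$, which is the claim. The only nontrivial conceptual step is recognizing that $\Delta_m=0$ forces $A^c_{i,m}=0$ in any feasible solution (via bandwidth) so that the degenerate interval can be freely repositioned within the order without altering the accumulated information at any receiver.
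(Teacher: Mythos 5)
Your proposal is correct and follows essentially the same strategy as the paper's proof: keep the original solution $\bx$ unchanged, observe that $\Delta_m=0$ forces $A_{i,m}^c=0$ so the two swapped events occur at the same physical time, and conclude that $\bx$ remains feasible under the swapped order, giving $T_{opt}^*\leq T_{opt}$. Your write-up is more thorough than the paper's (which only checks the decoding constraint explicitly), but the underlying argument is the same.
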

\begin{proof}
We prove the theorem by showing that when swapped decoding order is used, the original solution $\bx$ with optimal objective value $T_{opt}$ is still feasible under new decoding order. To show $\bx$ is feasible under the new decoding order, we show the decoding constraint of the swapped decoding order is satisfied by this solution.

If a decoding event associated with $T_{m-1}$ has its decoding constraint satisfied at this time, its decoding constraint is certainly satisfied at a later time $T_{m}$. If a decoding event associated with $T_m$ has its decoding constraint satisfied at this time, it actually has the constraint satisfied at an earlier time $T_{m-1}$ since $A_{i, m}^c = 0$ for all $i$ and $c$. Therefore, solution $\bx$ with optimal objective value $T_{opt}$ is feasible under swapped decoding order
\end{proof}

The idea behind Theorem~\ref{tm.swap_tm} is based on the following observation. A solution to the LP with $\Delta_m = 0$ indicates that the event associated with $T_m$ takes place at exactly the same time with the previous event $T_{m-1}$, or actually occurs before it. Therefore, swapping the position of $T_m$ and $T_{m-1}$ typically gives a decrease in the objective value once the LP is solved under swapped decoding order. In the case that $T_{i}^c$ is swapped with $T_{L}^c$ for $i$ not equal to $L$, the decoding event $T_{i}^c$ is excluded from the new decoding order by our algorithm.   
\subsection{Centralized algorithm}
We design and implement a centralized resource allocation and routing algorithm iterating between two sub-problems:
\begin{enumerate}
\item For a given decoding order, a resource allocation scheme is determined by solving a given LP problem $\cP$. Solving the optimization problem with these constraints gives time allocation $\Delta_s$ (thus $T_s$) for all $s$ and the resource allocation $A_{i, s}^c$ for all $i$, $c$ and $s$. 
\item With a given resource allocation scheme, we update the decoding order. 
\end{enumerate} 
In the following we discuss the algorithm in more details. 
 
 \subsubsection{Initialization}
We first consider the decoding order initialization as if there were only one packet to be routed. We design a heuristic algorithm similar to the greedy filling algorithm in \cite{Maric_2004} based on network state. For single packet transmission, the decoding order is essentially the order of nodes which successfully decode that packet and can come online to broadcast it. Therefore, with a set of already decoded nodes, the next node to decode is \textit{most likely} the one which is able to benefit the most from the already decoded nodes. We thus choose the next node to be the one that has maximum sum spectral efficiency between itself and all previously decoded nodes. The algorithm pseudocode is given in Algorithm~\ref{alg.DOInit}. 

\begin{algorithm}
\caption{Initialize the decoding order}
\label{alg.DOInit}
\begin{algorithmic}[1]
\STATE $\cT \leftarrow \{T_1^1\}$
\WHILE {$length(\cT) < L-1$}
	\STATE $j \leftarrow \arg \max_{k \notin \cT, k \neq L} \sum_{i \in \cT} C_{i, k}$
	\STATE $\cT \leftarrow \{\cT, T_j^1\}$
\ENDWHILE
\STATE $\cT \leftarrow \{\cT, T_L^1\}$
\end{algorithmic}
\end{algorithm}

In multiple packets case, since we have no initial knowledge about which nodes should decode which packets and in which order, we believe it is difficult to design an effective decoding order initialization algorithm. We therefore initialize the decoding order based purely on the one with single transmitted packet. Suppose in single packet case the following decoding order is found  
 
\begin{align*}
 \cT =\bigg \{ T_1^1, T_i^1, \ldots, T_L^1 \bigg \},
\end{align*}
the decoding order for multi-packet case is simply chosen to be
\begin{align*}
\cT =\bigg \{ T_1^1, \ldots, T_1^N, T_i^1, \ldots, T_i^N, \ldots, T_L^N \bigg \}.
\end{align*}

 %Since we do not know the preference of each node on every packet, including all $NL$ timing points is a reasonable choice. Other initial orders are also possible. For example, one may randomly reorder the timing points $T_i^c$ for all $i \neq 1, L$. 
 
 \subsubsection{Decoding order updates}
Based on the solution to the LP, the algorithm first searches for all those $m$'s such that $\Delta_m = 0$. Then for those $m$, it checks for the situation where $T_i^c$ is followed by $T_L^c$ for some $i \neq L$. If it finds one, it drops the corresponding $T_i^c$ and reruns the LP. If it does not find one, it swaps the order of $T_{m-1}$ and $T_m$. If the swapped decoding events are all the same as the ones in the previous iteration, and if there was no decoding event dropped in the previous iteration, the algorithm terminates. We present the sketch of the algorithm in Algorithm~\ref{alg.algorithm}. 
\begin{algorithm}
\caption{The routing and resource allocation algorithm}
\label{alg.algorithm}
\begin{algorithmic}[1]
\STATE Choose a LP $\cP$
\STATE Initialize a decoding order $\cT$
\STATE DoAgain $\leftarrow$ \TRUE
	\WHILE {DoAgain}
	\STATE result $\leftarrow$ LP($\cT$, $\cP$)
	\STATE index $\leftarrow$ SearchForZeroDeltas(result)
	\STATE ToDrop $\leftarrow$ SearchForNodeDecodeAfterDestination(index)
	\IF {ToDrop $\neq 0$}
		\STATE DoAgain $\leftarrow$ \TRUE
		\STATE drop the decoding event
	\ELSIF {drop $=0$ \textbf{and} index $=$ index in last iteration \textbf{and} no event is dropped in last iteration}
		\STATE DoAgain $\leftarrow$ \FALSE
		\STATE \textbf{break}
	\ELSE
		\STATE DoAgain $\leftarrow$ \TRUE
		\STATE SwapEvents(result)
	\ENDIF
\ENDWHILE 
\end{algorithmic}
\end{algorithm}

Because of the exponential number of orderings we expect the problem of finding the optimal decoding order to be NP-hard. The sub-optimality of our heuristic algorithm comes from the fact that the excluded decoding events may actually be helpful. Without a mechanism to ``re-introduce'' the excluded events our algorithm is not expected to achieve global minimum for all networks.   

\subsubsection{Characteristics of final route}
Our algorithm has the property that different packets may take different paths toward the destination under the per-node resource constraint. This exploits the spatial route diversity. For example, the algorithm may schedule the first $3$ packets to pass through a set of relay nodes and schedule the next $3$ packets to pass through yet another set of relay nodes. The intuition is that instead of choking an existing route and overloading the already busy nodes, the algorithm balances the traffic by invoking other free nodes to fully utilize the network. 

\subsubsection{Challenges}
The challenges we face in designing this multi-packets routing and resource allocation algorithm are more difficult than the ones the authors encountered in \cite{Draper_2008}. Essentially, we are trying to sort out a correct order for \textit{both} nodes and packets. Our algorithm may be stuck in local solutions without the ability to escape. However, as we show in simulation, for smaller networks, although it may not always find the optimal solution, the local solution it achieves is still promising.  
%Therefore, to fully utilize the resource, it is always wise to packetize a big data chunk into several smaller packets and route them through the network. By doing so, we fully exploit the free resources within the network. As we will see in the simulation section, in transmitting a single data chunk, dividing it into smaller pieces usually decreases the end-to-end total transmission time.  

\section{Numerical Results}
\label{sec.simulation_results}
In this section we present detailed simulation results for the algorithm. These results exemplify the basic properties of using mutual information accumulation in cooperative communication.
\subsection{A simple network}

To better illustrate how our algorithm exploits the spatial route diversity, we first consider a simple four-node example in which a data file of size $20$ is equally packetized into two packets of size $10$. Consider the diamond network in Fig.~\ref{fig.diamond_example}. Links with nonzero spectral efficiency (bits/sec/Hz) are shown. Each node is allocated with $1$ unit of bandwidth and the source (labeled as $1$) transmits these two packets to the destination (labeled as $4$).  

\begin{figure}[!t] 
  \centerline{\epsfig{figure=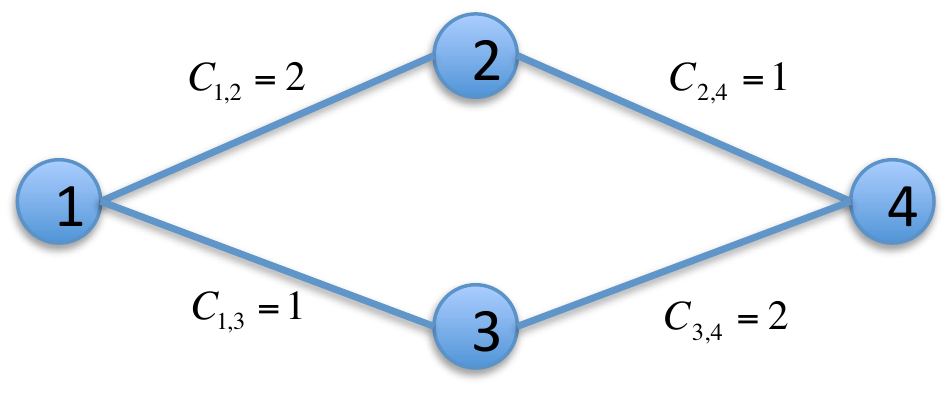,width=5cm}}
  \caption{A diamond network example. Note that transmitting a single packet on route $1-3-4$ or $1-2-4$ takes the same amount of time.}
  \label{fig.diamond_example}
\end{figure} 

Running our algorithm returns the following routing and resource allocation scheme. Starting from time $0$, node $1$ broadcasts packet $1$ for $5$ units of time. By the end of time $5$, node $2$ decodes packet $1$ and node $3$ accumulates half of packet $1$. Starting from time $5$, node $2$ broadcasts packet $1$ to node $4$ while node $1$ broadcasts packet $2$. By the end of time $15$, node $4$ decodes packet $1$ and node $3$ decodes packet $2$. Note that by the end of time $10$, node $2$ decodes packet $2$. Then starting from time $15$, node $2$ and $3$ transmit packet $2$ to node $4$ at the same time and node $4$ decodes packet $2$ after $3.33$ units of time. The total transmission time is thus $5+10+3.33 = 18.33$ units of time. In this example, node $3$ never decode packet $1$ thus packet $1$ does not pass through route $1-3-4$. Whereas node $3$ decodes packet $2$ and it cooperates with node $2$ to deliver packet $2$ to destination. 

It is interesting to see that if two packets were routed through the same path, the total transmission time would be $2 \times 15 = 30$. If the data file itself is routed without packetizing, the total transmission time would be $30$ no matter which route ($1-3-4$ or $1-2-4$) it takes. This suggests that possible improvement can be achieved if we divide data files into smaller packets. We further illustrate this property in the following section.

\subsection{Simulation on general networks}
We now study the performance of our algorithm on more general networks. For better illustration, we choose bandwidth as our single resource element being allocated.  

We simulate the performance of our algorithm on $10$ randomly scattered wireless nodes within a unit square. The source, denoted as $1$, is always located at $[0.2, 0.2]$ and the destination, denoted as $L$, is always located at $[0.8, 0.8]$. The remaining $8$ nodes are randomly placed according to the uniform distribution in the unit square. The channel gain $h_{i, j}$ is deterministically related to the Euclidean distance $d_{i, j}$ between node $i$ and $j$ as $h_{i, j} = d_{i, j}^{-2}$. The spectral efficiency between node $i$ and $j$ is given by ({\ref{eq.spectral_efficiency}). 

\subsubsection{Per-node bandwidth constraint}
We first consider the system with the per-node bandwidth constraint. We consider a system with single arrival of a data file of size $20$ which is equally packetized into $N$ packets. Each packet is of size $20/N$. We also let $P_i = P = 1$, $W_i = 1$ and $N_0 / 2 = 1$. We simulate networks to estimate the cumulative distribution function (CDF) of the total transmission time when we either do not subdivide the data file (i.e., the same problem as considered in \cite{Draper_2008}) or divide it into $2$ or $3$ packets for transmission. To further illustrate the benefits of using mutual information accumulation, we also plot the transmission performance on the route obtained using Dijkstra's algorithm \cite{dijkstra} with and without mutual information accumulation in Fig.~\ref{fig.pernode_bw_sim}.

\begin{figure}[!t] 
  \centerline{\epsfig{figure=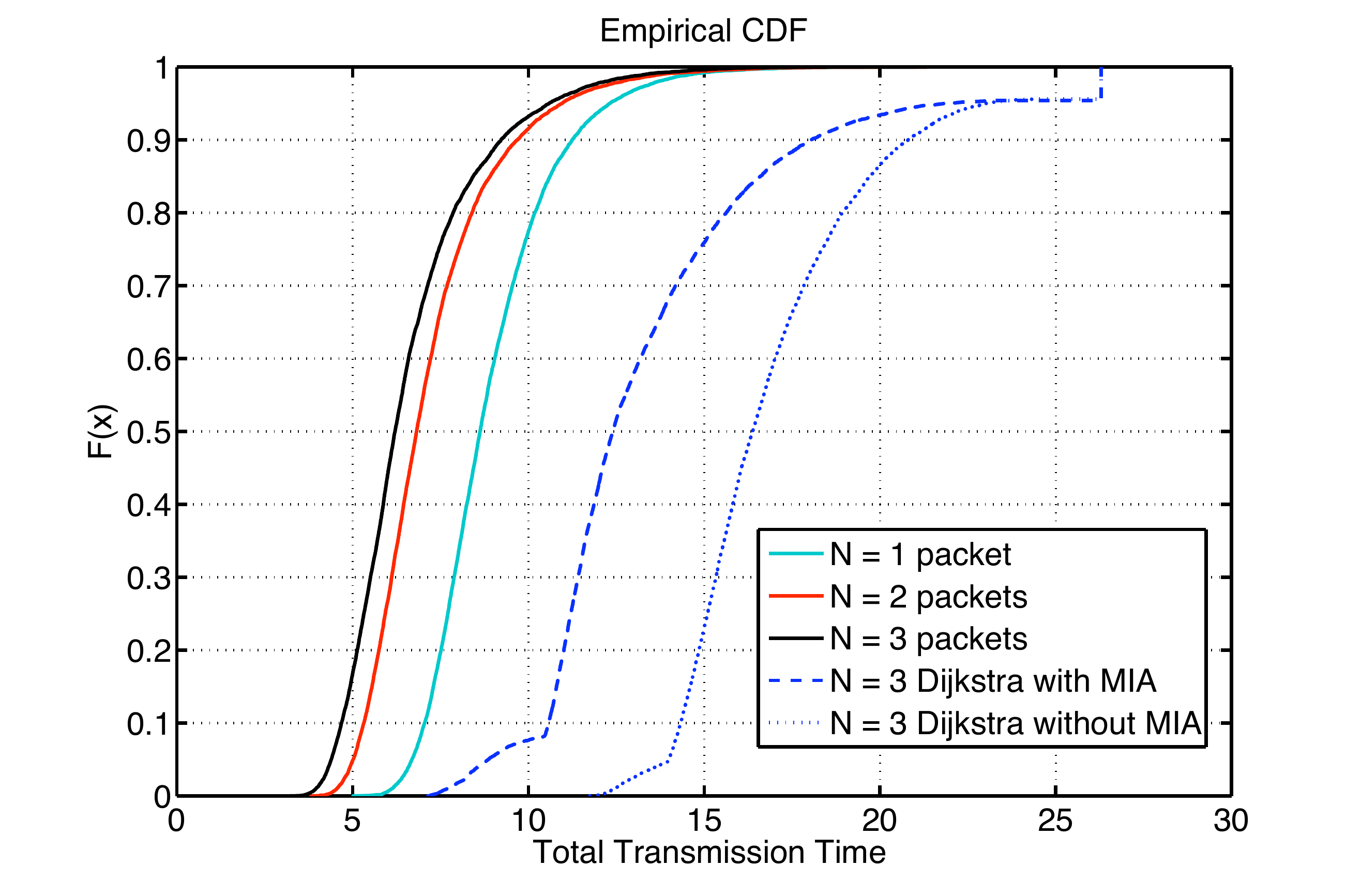,width=9cm}}
  \caption{Cumulative distribution of total transmission time under N = 1, 2, and 3. While networks randomly generated within the unit square have constant number of uniformly distributed nodes, the Dijkstra routes on these networks may contain different number of nodes. This contributes to the non-smoothness in the Dijkstra algorithm simulation plot.}
  \label{fig.pernode_bw_sim}
\end{figure} 

We see that dividing a data file into packets improves the performance by reducing the total transmission time. If we keep dividing it into more and more packets, we expect that the performance eventually converge as if the data file is treated as fluid. Comparing with Dijkstra's algorithm, we also notice that the decrease in total transmission time comes from two causes; the use of mutual information accumulation and the ability to exploit other available routes and resource within a network, not merely confined in a pre-determined Dijkstra's route. This illustrates how our algorithm exploits the spatial route diversity. 

\subsubsection{Per-node bandwidth constraint with arrival process}
We now consider a deterministic arrival process under the per-node bandwidth constraint. Consider a randomly generated network shown in Fig.~\ref{fig.arrprocs_ex}. We now route $K=3$ data files with inter-arrival time between two consecutive files, each is of size $20$, from the source to the destination. We also let $P_i = P = 1$, $W_i = 1$ and $N_0 / 2 = 1$. Instead of using total transmission time as our objective, we try to minimize the average transmission time given in (\ref{eq.avg_tx_time}). We simulate the average transmission time required when inter-arrival time ($\tau_2 - \tau_1$ or $\tau_3 - \tau_2$) varies from $0$ to $15$. 
\begin{figure}[!t] 
  \centerline{\epsfig{figure=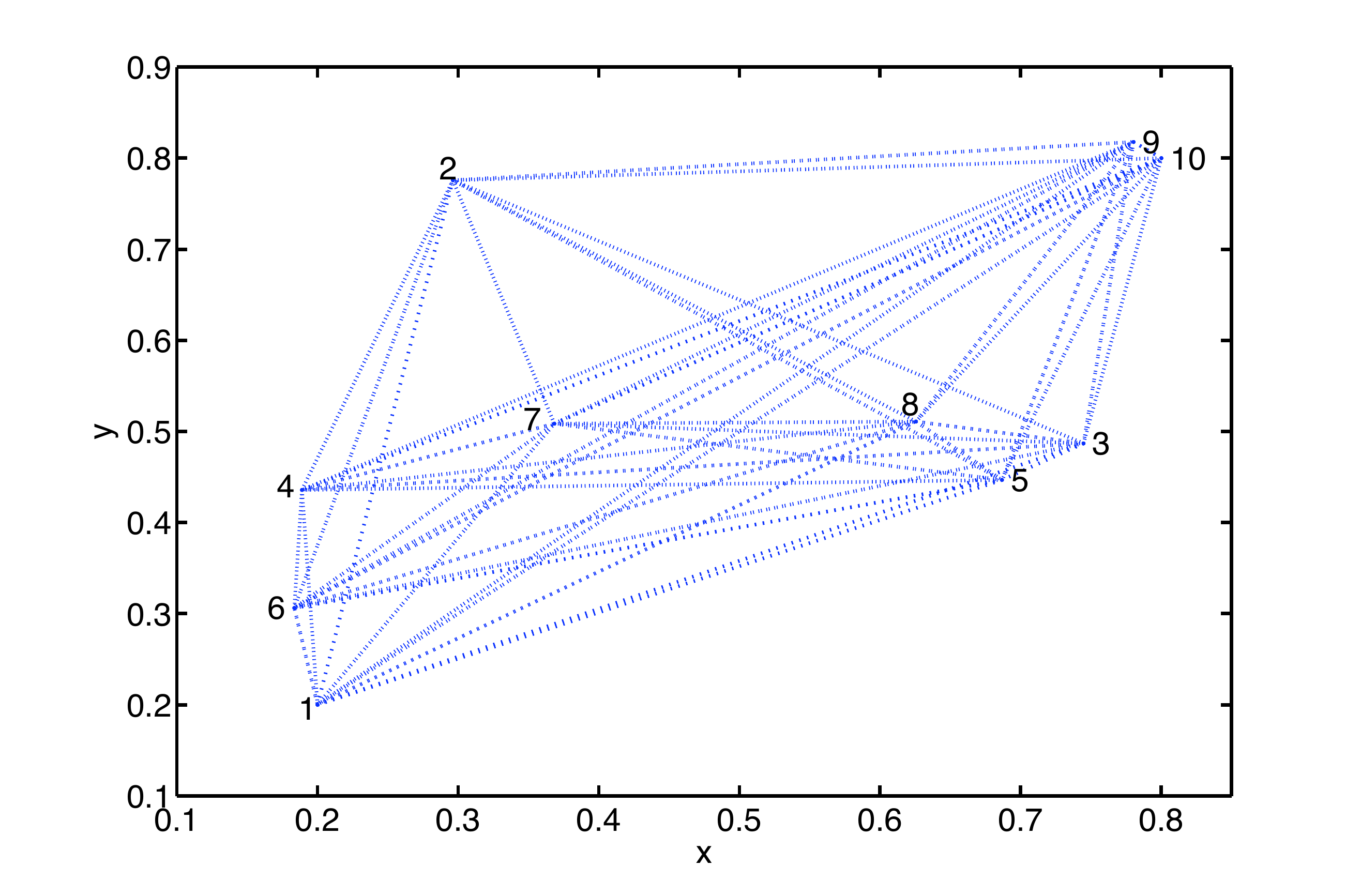,width=9cm}}
  \caption{A randomly generated network. Transmitting a single data file of size $20$ on this network takes $7.1576$ units of time.}
  \label{fig.arrprocs_ex}
\end{figure} 
 
The simulation shows the exploitation of the temporal route diversity. For this given network, transmitting a single data file of size $20$ requires $7.1576$ units of time. If the inter-arrival time is larger than this number, then transmitting $3$ arriving data files means repeating the single file routing and resource allocation policy three times. Thus the average transmission time is the same as transmission time for a single data file. When the inter-arrival time is small, files compete with each other for resource, thus the average transmission time increases as shown in Fig.~\ref{fig.pernode_arrprcs}

\begin{figure}[!t] 
  \centerline{\epsfig{figure=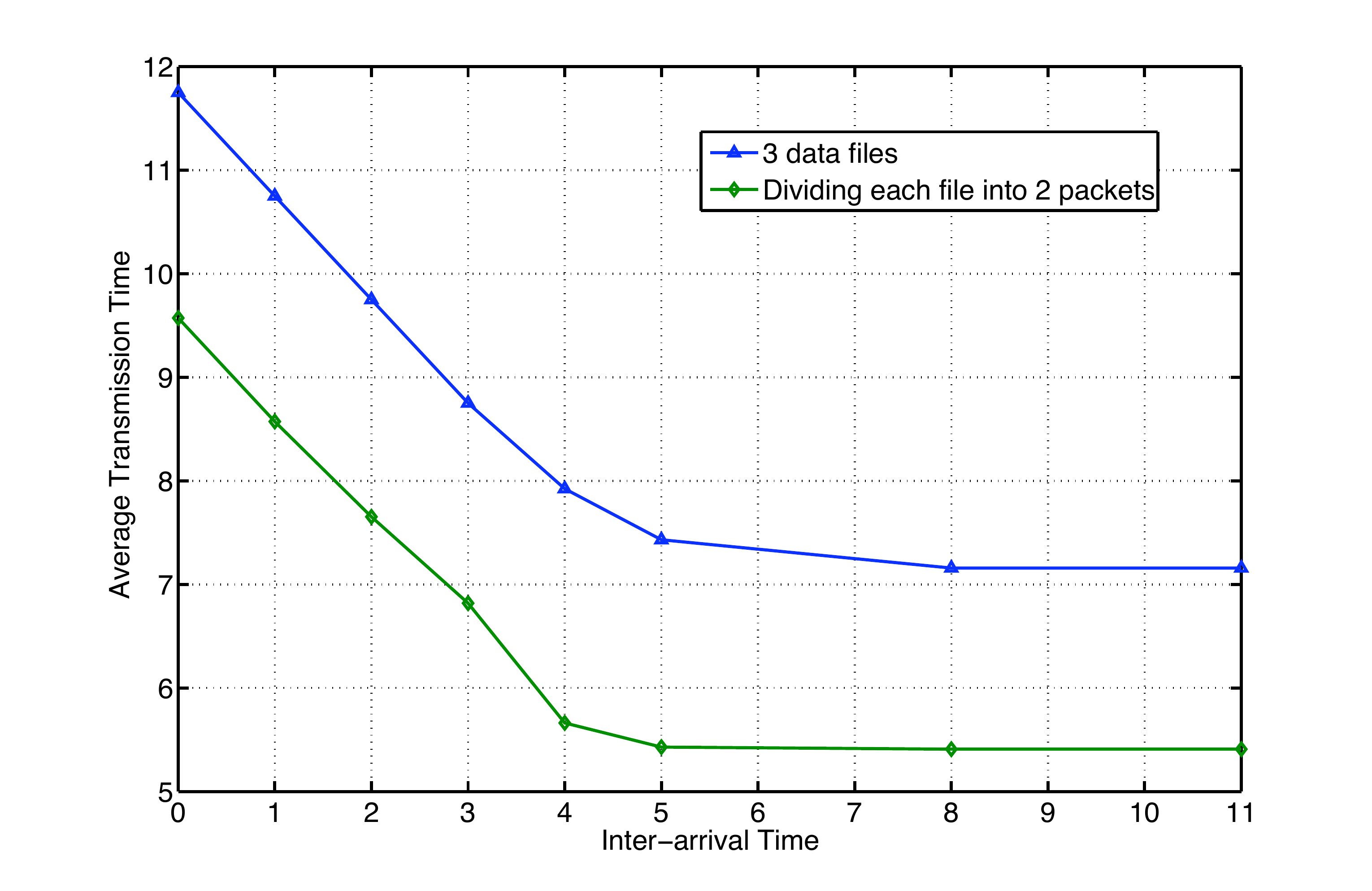,width=9cm}}
  \caption{Average transmission time under different inter-arrival times.}
  \label{fig.pernode_arrprcs}
\end{figure}  

We now divide each data file equally into $2$ packets of size $10$ (thus $N = 2K = 6$). Since under the per-node resource constraint dividing a big data file into packets better utilizes network resource, we have a performance gain shown in Fig.~\ref{fig.pernode_arrprcs}. Note that the inter-arrival time is between every two packets of size $10$. 
 
\subsubsection{Sum bandwidth constraint}
We next simulate the performance of the system with the sum bandwidth constraint. We consider a single data file of size $20$ which is equally divided into $N$ packets. We let $W_T = 10$, $P_i = P = 1$, $N_0/2 = 1$. Note that the total bandwidth is the sum of individual bandwidth $W_i$ in the per-node bandwidth constraint simulation setting. 

The simulation result is shown in Fig.~\ref{fig.sum_bw_sim}. Interestingly, we note that under the sum bandwidth constraint, dividing a data file into more and more packets would not help to decrease the total transmission time. This is the result of Theorem~\ref{tm.sum_bw_theorem2}; since each packet is also $\frac{1}{N}$ of the size of the data file, the total transmission time would not change. %An example of such routing is shown in Fig.~\ref{fig.sum_bw_ex}.

We also notice that in Fig.~\ref{fig.sum_bw_sim}, the performance under the sum bandwidth constraint is better than that under the per-node bandwidth constraint. The reason is that in allocating each node a fixed amount of bandwidth as in the per-node constraint, the nodes which cannot transmit waste the resource assigned to them. In the sum bandwidth constraint, the network gathers all available resource and distributes it to serve the most needed nodes; the nodes which cannot transmit are not assigned with resource.   

\begin{figure}[!t] 
  \centerline{\epsfig{figure=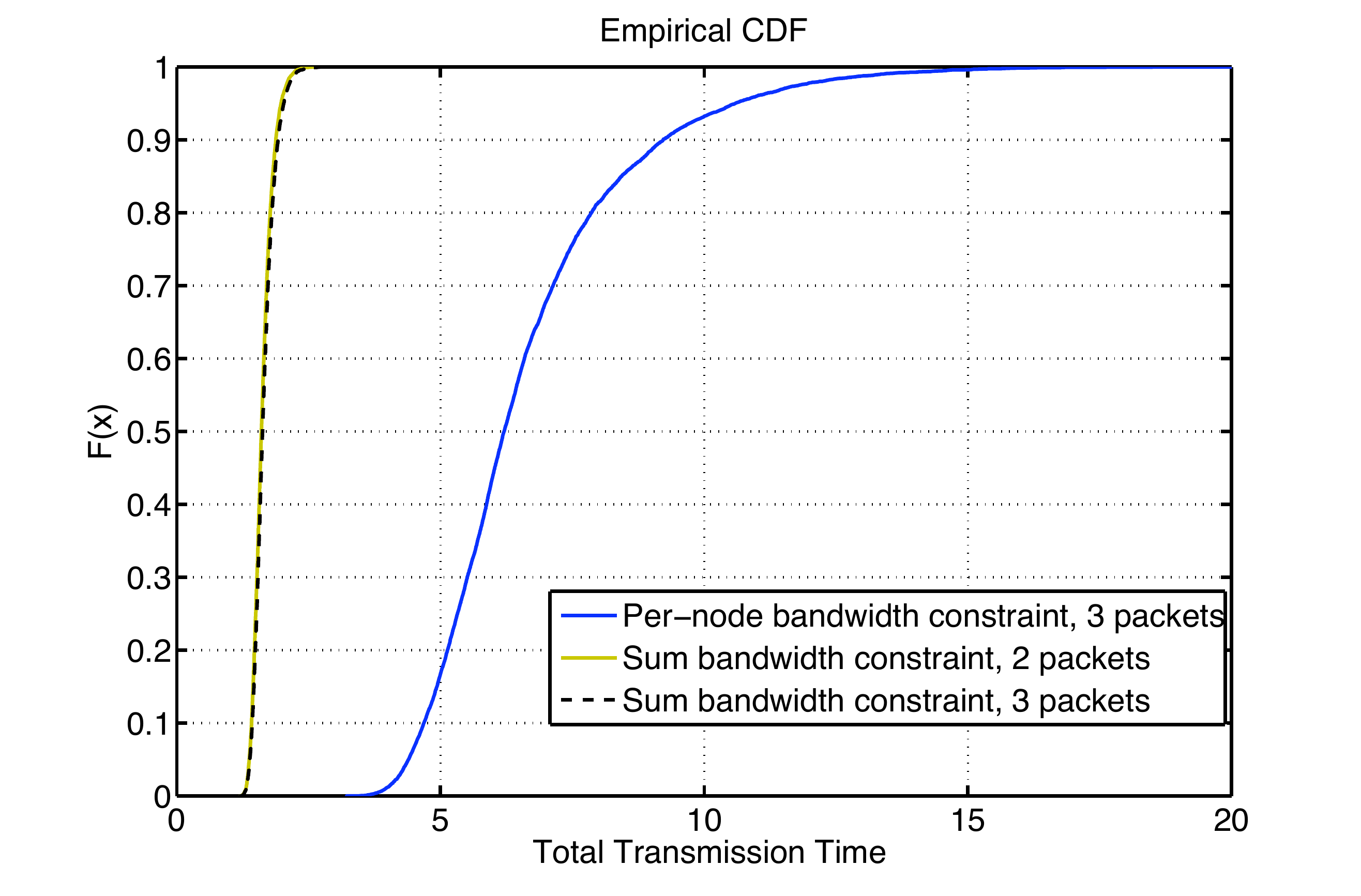,width=9cm}}
  \caption{Simulation of the sum bandwidth constraint and comparison with the per-node bandwidth constraint. }
  \label{fig.sum_bw_sim}
\end{figure} 

%\begin{figure}[!t] 
%  \centerline{\epsfig{figure=,width=9cm}}
%  \caption{An example of route under sum bandwidth constraint is shown. All packets follow the route [1, 2, 9, 10]. The total transmission time is $1.4299$ and it is the same for any $N$.}
%  \label{fig.sum_bw_ex}
%\end{figure} 

\section{Summary and Future Work}
\label{sec.discussion_and_conclusion}
In this paper we study the problem of cooperative communication using mutual information accumulation. We design and prototype a routing and resource allocation algorithm based on solving a LP based problem iterating between two subproblems; finding the best decoding order and finding the best resource allocation given a decoding order. Our algorithm exploits both the spatial and temporal route diversity and suggests that performance can be improved by dividing a big data file into several small packets. Our algorithm also supports packets with different arrival times. 

In some scenarios, it is not possible to centralize the routing and resource allocation scheme. Limitations on centralized algorithm are constraining in large scale networks where aggregating the channel state information can incur unacceptable overheads. The centralized algorithm is also not desired in temporally varying networks where central controller often has trouble in updating the estimates of channel state information promptly. Developing a distributed algorithm would be an important direction for future research.  

Conventional wireless nodes have the ability to adjust the transmitting power. With individual transmitting power as a varying variable, it is therefore natural to consider power allocation optimization. We note that individual transmitting power affect the spectral efficiency according to ({\ref{eq.spectral_efficiency}). Thus maintaining both the time-bandwidth product and the transmitting power as optimization variables makes the decoding constraint nonlinear. Understanding how power allocation would affect the transmission performance is an interesting topic for investigation. 

Consider a cooperative broadcast problem in which the objective of the network is to deliver packets from the source to all other nodes. Since every node is a destination, we do not need to drop a decoding event. This preserves some optimality of our algorithm. However, it is still unknown that if our algorithm is the most effective one. It would be interesting to investigate possible extensions of our algorithm tailored for this problem.

\bibliographystyle{IEEEtran}
\bibliography{myrefs}
\end{document}